\documentclass[smallabstract,smallcaptions]{dccpaper}
\usepackage[utf8]{inputenc}

\usepackage{graphicx}
\usepackage[english]{babel}
\usepackage{amsmath}
\usepackage{amsthm}
\usepackage{amssymb}
\usepackage{forest}
\usepackage{color}
\usepackage{url}
\usepackage{tikz}
\usetikzlibrary{arrows,automata,positioning}
\usepackage{enumerate}
\usepackage{float}
\theoremstyle{definition}
\newtheorem{defi}{Definition}

\theoremstyle{plain}
\newtheorem{theorem}[defi]{Theorem}

\newtheorem{lemma}[defi]{Lemma}

%%%%%%%%%%%%%%%%%%%%%%%%%%%%%%%%%%%%%
% One Column
%%%%%%%%%%%%%%%%%%%%%%%%%%%%%%%%%%%%%%

\usepackage{epsfig}

\newlength{\figurewidth}
\newlength{\smallfigurewidth}

%%%%%%%%%%%%%%%%%%%%%%%%%%%%%%%%%%%%%%
% One Column
%%%%%%%%%%%%%%%%%%%%%%%%%%%%%%%%%%%%%%
\setlength{\smallfigurewidth}{2.75in}
\setlength{\figurewidth}{6in}

\title
{
\large
\textbf{Tunneling on Wheeler Graphs}
}

\author{%
Jarno Alanko$^{1}$, Travis Gagie$^{2,3}$, 
Gonzalo Navarro$^{2,4}$, Louisa Seelbach Benkner$^{5}$\\[0.5em]
{\small\begin{minipage}{\linewidth}\begin{center}
%\begin{tabular}{ccccccc}
$^{1}$Dept. of Computer Science, University of Helsinki, Finland, jarno.alanko@helsinki.fi \\
$^{2}$CeBiB --- Center for Biotechnology and Bioengineering, Chile \\
$^3$EIT, Diego Portales University, Chile, travis.gagie@gmail.com \\
$^4$Dept. of Computer Science, University of Chile, Chile, gnavarro@dcc.uchile.cl \\
$^{5}$Dept. of Electrical Engineering and Computer Science, University of Siegen, Germany, seelbach@eti.uni-siegen.de \\
%\end{tabular}
\end{center}\end{minipage}}
}

\begin{document}

\maketitle
\thispagestyle{empty}

\maketitle

\section*{Abstract}
% arxiv version
The Burrows-Wheeler Transform (BWT) is an important technique both in data compression and in the design of compact indexing data structures.  It has been generalized from single strings to collections of strings and some classes of labeled directed graphs, such as tries and de Bruijn graphs.  The BWTs of repetitive datasets are often compressible using run-length compression, but recently Baier (CPM 2018) described how they could be even further compressed using an idea he called {\em tunneling}.  In this paper we show that tunneled BWTs can still be used for indexing and extend tunneling to the BWTs of Wheeler graphs, a framework that includes all the generalizations mentioned above.
%Baier (CPM 2018) describes {\em tunneling} as a technique to further exploit redundancies in the Burrows-Wheeler Transform. In this paper we show how to retain indexed text searching on the resulting structure and generalize the concept to Wheeler graphs.

\Section{Introduction}

The Burrows-Wheeler transform (BWT) is a cornerstone of data compression and succinct text indexing. It is a reversible permutation on a string that tends to compress well with run-length coding, while simultaneously facilitating pattern matching against the original string. Recently, practical data structures have been designed on top of the run-length compressed BWT to support optimal-time text indexing within space bounded by the number of runs of the BWT \cite{gagie2018optimal}.

However, run-length coding does not necessarily exploit all the available redundancy in the BWT of a repetitive string. To this end, Baier recently introduced the concept of \emph{tunneling} to compress the BWT by exploiting additional redundancy not yet captured by run-length compression \cite{baier2018undetected}. While his representation can achieve better compression than run-length coding, no support for text indexing is given.

Meanwhile, the concept of \emph{Wheeler graphs} was introduced by Gagie et al.\ as an alternative way to view Burrows-Wheeler type indices \cite{gagie2017wheeler}. The framework can be used to derive a number of existing index structures in the Burrows-Wheeler family, like the classical FM-index \cite{FM05} including its variants for multiple strings \cite{mantaci2007extension} and alignments \cite{na2016fm}, the XBWT for trees \cite{ferragina2005structuring}, the GCSA for directed acyclic graphs \cite{siren2014indexing}, and the BOSS data structure for de Bruijn graphs \cite{bowe2012succinct}.

In this work, we show how Baier's concept of tunneling can be neatly explained in terms of Wheeler graphs. Using the new point of view, we show how to support FM-index style pattern searching on the tunneled BWT. We also describe a sampling strategy to support pattern counting and locating and character extraction, making our set of data structures a fully-functional FM-index. Supporting FM-index operations was posed as an open problem by Baier \cite{baier2018undetected}.

We also use the generality of the Wheeler graph framework to generalize the concept of tunneling to any Wheeler graph. This result can be used to compress any  Wheeler graph while still supporting basic pattern searching to decide if a pattern exists as a path label in the graph. This has applications for all index structures that can be explained in terms of Wheeler graphs.

\Section{Preliminaries}

Let $\mathcal{G}=(V, E, \lambda)$ denote a directed edge-labeled multi-graph, in which $V$ denotes the set of nodes, $E$ denotes the multiset of edges and $\lambda: E \rightarrow A$ denotes a function labeling each edge of $\mathcal{G}$ with a character from a totally-ordered alphabet $A$. Let $\prec$ denote the ordering among $A$'s elements. We follow the definition of Gagie et al.~\cite{gagie2017wheeler}.

\begin{defi}[Wheeler graph] \label{def:wg}
The graph $\mathcal{G}=(V, E, \lambda)$ is called a \emph{Wheeler graph} if there is an ordering on the set of nodes such that nodes with in-degree $0$ precede those with positive in-degree and for any two edges $(u_1,v_1), (u_2, v_2)$ labeled with $\lambda((u_1,v_1))=a_1 $ and $\lambda((u_2, v_2))=a_2$, we have
\begin{itemize}
\item[(i)] $a_1 \prec a_2 \Rightarrow v_1 < v_2$,
\item[(ii)] $\left(a_1=a_2 \right) \wedge \left( u_1 <  u_2 \right) \Rightarrow v_1 \leq v_2$. \qed
\end{itemize}
\end{defi}
We note that the definition implies that all edges arriving at a node have the same label.
We call such an ordering a \emph{Wheeler ordering of nodes}, and the rank of a node within this ordering the \emph{Wheeler rank} of the node. Given a pattern $P \in A^*$, we call a node $v \in V$ an \emph{occurrence} of $P$ if there is a path in $\mathcal G$ ending at $v$ such that the concatenation of edge labels on the path is equal to $P$. The Wheeler ranks of all occurrences of $P$ form a contiguous range of integers, which we call the $\emph{Wheeler range}$ of $P$. Finding such a range is called {\em path searching} for $P$.
% We also use the term Wheeler range to refer to any set of nodes such that the Wheeler ranks of the nodes form a contiguous interval.

Given a Wheeler ordering of nodes, we define the corresponding \emph{Wheeler ordering of edges} such that for a pair of edges $e_1 = (u_1,v_1)$ and $e_2 = (u_2, v_2)$, we have $e_1 < e_2$ iff $\lambda(e_1) \prec \lambda(e_2)$ or ($\lambda(e_1) = \lambda(e_2)$ and $u_1 < u_2)$.
When referring to edges, the term \emph{Wheeler rank} refers to the rank of the edge in the Wheeler ordering of edges, and \emph{Wheeler range} refers to a set of edges whose Wheeler ranks form a contiguous interval. 

We use a slightly modified version of the representation of Wheeler graphs  proposed by Gagie et al.~\cite{gagie2017wheeler}. Suppose we have a Wheeler graph with $n$ nodes and $m$ edges. Then we represent the graph with the following data structures:
\begin{itemize}
    \item A string $L[1..m] = L_1 \cdots L_{n}$ where $L_i$ is the concatenation of the labels of the $|L_i|$ edges going out from the node with Wheeler rank $i$ such that the labels from a node are concatenated in their relative Wheeler order.
    \item An array $C[1..|A|]$ such that $C[i]$ is the number of edges in $E$ with a label smaller than the $i$th smallest symbol in $A$.
    \item A binary string $I[1..n + m + 1] = X_1,\cdots,X_{n} \cdot 1$ where $X_i$ = $1 \cdot 0^{k_i}$ and $k_i$ is the indegree of the node with Wheeler rank $i$.
    \item A binary string $O[1..n + m + 1] = X_1,\cdots,X_{n} \cdot 1$ where $X_i$ = $1 \cdot 0^{l_i}$ and $l_i$ is the outdegree of the node with Wheeler rank $i$.
\end{itemize}
%Given these data structures, we can traverse the Wheeler graph with the following two operations: \textbf{(a)} given a Wheeler range $W$ of nodes, find the Wheeler range of edges with label $c$ whose origins are in $W$; \textbf{(b)} given a Wheeler range of edges, find the Wheeler range of nodes at the destinations of those edges \cite{gagie2017wheeler}.
Given these data structures, we can traverse the Wheeler graph with the following two operations: First, given the Wheeler rank $i$ of a node, find the Wheeler rank of the $k$-th out-edge labeled $c$ from the node using Eq.~(\ref{eq:node_to_edge}) below; second, given the Wheeler rank $j$ of an edge, find the Wheeler-rank $r$ of its target node with Eq.~(\ref{eq:edge_to_node}):
\begin{equation} \label{eq:node_to_edge}
C[c] + rank_c(L, select_1(O, i)-i) + k, 
\end{equation}
\begin{equation} \label{eq:edge_to_node}
rank_1(I, select_0(I, j)), % todo: simplify?
\end{equation}
where, given a string $S$, $rank_c(S,i)$ denotes the number of occurrences of character $c$ in $S[1..i]$ and $select_c(S,i)$ denotes the position of the $i$-th occurrence of $c$ in $S$. For the binary strings $I$ and $O$, these operations can be carried out in constant time using $o(|I|+|O|)$ extra bits \cite{Cla96}. For string $L$, operation $rank$ can be carried out in time $O(\log\log_w |A|)$ on a $w$-bit RAM machine using $o(|L|\log|A|)$ extra bits \cite[Thm.~8]{BNtalg14}.

Given a Wheeler range $[i,i']$ of nodes, we can find the Wheeler rank of the first and last edge labeled $c$ leaving from a node in $[i,i']$ with a variant of Eq.~(\ref{eq:node_to_edge}): $C[c]+rank_c(L,select_1(O,i)-i)+1$ and $C[c]+rank_c(L,select_1(O,i'+1)-(i'+1))$, respectively; then we apply Eq.~(\ref{eq:edge_to_node}) on both edge ranks to obtain the corresponding node range (the result is a range by the path coherence property of Wheeler graphs \cite{gagie2017wheeler}). This operation enables path searches on Wheeler graphs.

%Given a Wheeler range $W$ of nodes and a character $c$, we can find the Wheeler range of edges with label $c$ such that the origins of the edges are in $W$ ($select_1$ on $O$ and $rank_c$ on $L$). (b) Given a range of edges, we can find the range of nodes at the destinations of those edges ($rank_1$ on $I$ and $C$-array (i.e. select on $F$)).

\Section{Tunneling}

We adapt Baier's concept of \emph{blocks} on the BWT \cite{baier2018undetected} to Wheeler graphs.

\begin{defi}[Block]\label{def:block}
A \emph{block} $\mathcal{B}$ of a Wheeler graph $\mathcal{G}=(V, E, \lambda)$ of size $s$ and width $w$ is a sequence of $w$-tuples $(v_{1,1}, \dots, v_{w,1}), \dots, (v_{1,s}, \dots, v_{w,s})$ of pairwise distinct nodes of $\mathcal{G}$ such that
\begin{itemize}
\item[(i)] For  $1 \leq i \leq w-1$ and $1 \leq j \leq s$, the node $v_{i+1,j}$ is the immediate successor of $v_{i,j}$ with respect to the Wheeler ordering on $V$.
\item[(ii)] For $1 \leq i \leq w$, let $V_i=\{v_{i,j} \mid 1 \leq j \leq s\}$, $E_i= E \cap (V_i \times V_i)$, and $\lambda_i = \lambda \big|_{E_i}$. The subgraphs $t_i=(V_i, E_i, \lambda_i)$ are isomorphic subtrees of $\mathcal{G}$, preserving topology and labels.
For $1 \leq i \leq w-1$, let $f_i: t_i \rightarrow t_{i+1}$ denote the corresponding isomorphisms, thus $v_{i+1,j} = f_i(v_{i,j})$ for all $1 \le j \le s$.
\item[(iii)] For $1 \leq i \leq w$, let $v_{i,1}$ denote the root node of $t_i$. In particular, $v_{i,1}$ is the only node of indegree $0$ in $t_i$. All edges leading to a node in $\{v_{i,1} \mid 1 \leq i \leq w\}$ are labeled with the same character. The indegrees of these nodes may differ. 
\item[(iv)] For $1 \leq i \leq w$ and $2\leq j \leq s$, the nodes $v_{i,j}$ are of indegree $1$ in $\mathcal{G}$ (and by (i) and (ii), of indegree $1$ in the corresponding subtree $t_i$, that is, the only edge in $\mathcal{G}$ leading to such a node $v_{i,j}$ belongs to the subtree $t_i$).

\item[(v)] For every integer $1 \leq j \leq s$ and character $c \in A$, exactly one of the following conditions holds: 
\begin{itemize}
    \item[(a)] For every $1 \leq i \leq w$, there is exactly one out-edge of $v_{i,j}$ labeled with $c$, which is contained in $E_i$. There are no out-edges of $v_{i,j}$ labeled with $c$ leading to non-block nodes. % WARNING: THERE IS A HARD-CODED CROSS-REFERENCE TO THIS ITEM IN THE PATH SEARCHING SECTION. IF YOU UPDATE THIS DEFINITION, CHECK THAT THE REFERENCE IS STILL CORRECT
    \item[(b)] For every $1 \leq i \leq w$, there is no out-edge of $v_{i,j}$ labeled with $c$ contained in $E_i$. There may be out-edges of $v_{i,j}$ labeled with $c$ leading to non-block nodes. The number of such out-edges for each node may differ. \qed
\end{itemize}
\end{itemize}
\end{defi}

\definecolor{mygray}{rgb}{0.88, 0.88, 0.88}

\begin{figure}[t]
\tiny{
\begin{tikzpicture}[->,>=stealth',shorten >=1pt,auto,node distance=4em, semithick]
  \tikzstyle{every state}=[]

  \node[state, fill=mygray] (s1) [] {25};
    \node[state, fill=mygray] (s2) [below right of =s1]{32};
   \node[state, fill=mygray ](s3)[below left of =s1]{64};
   
   \node[state, fill=mygray](s4)[below of =s2]{10};
   \node[state, fill=mygray](s6)[ left of =s4]{42};
   \node[state, fill=mygray](s7)[left of=s6]{75};
   
   \node[state, fill=mygray] (s5)[below of =s4]{5};

   \node[state] (e1)[ above right of =s1]{};
   \node[state] (e2)[above left of =s1]{};
   
   \node [state] (e3) [right of =s2]{};
   \node [state] (e4) [right of =s5]{};
   \node [state] (e5) [below right  of =s5] {};
   \node [state] (e6) [ below  of =s5]{};
   
\node [state] (e7) [below of =s6]{};
\node [state] (e8)[below of =s7]{};

\node[state, draw=none] (n1)[above of =e1]{};
\node[state, draw=none](n2)[above right of =e1]{};
\node[state, draw=none](n3)[above of =e2]{};

\node[state, draw=none](n4)[below of =e8]{};
\node[state, draw=none](n5)[below right of=e3]{};
\node[state, draw=none](n6)[below of =e3]{};

\path (s1) edge[blue] node {b} (s2);
\path (s1) edge[blue] node {c}  (s3);

\path (s2) edge[blue] node {a} (s4);
\path (s3) edge[blue] node {b} (s6);
\path (s3) edge[blue] node {c} (s7);

\path (s4) edge[blue] node {a} (s5);

\path (e1) edge[blue] node{a} (s1);
\path (e2) edge[blue] node{a} (s1);

\path (s2) edge[blue] node{b} (e3);
\path (s5) edge[blue] node{c} (e4);
\path (s5) edge[blue] node{c} (e5);
\path (s5) edge [blue] node{c} (e6);

\path (s6) edge[blue] node{b} (e7);
\path (s7) edge[blue] node{b} (e8);

\path(n1) edge[blue] node{b}(e1);
\path (n2) edge[blue]node{b}(e1);
\path (n3) edge[blue] node{b}(e2);

\path(e8) edge[blue] node{a}(n4);
\path(e3)edge[blue] node{a}(n5);
\path(e3) edge[blue] node{b}(n6);

 %%%%%%%%%%%%%%%%%%%%%%%%%%%%%%%%%%%%%%%%%

\node[state, fill=mygray] (a1) [left=2.5cm of s1] {24};
    \node[state, fill=mygray] (a2) [below right of =a1]{31};
   \node[state, fill=mygray ](a3)[below left of =a1]{63};
   
   \node[state, fill=mygray](a4)[below of =a2]{9};
   \node[state, fill=mygray](a6)[ left of =a4]{41};
   \node[state, fill=mygray](a7)[left of=a6]{74};
   
   \node[state, fill=mygray] (a5)[below of =a4]{4};

   \node[state] (t1)[ above of =a1]{};

   \node [state] (t6) [ below  of =a5]{};

\node [state] (t8)[below of =a7]{};
\node[state] (t9)[below left  of =a7 ]{};

\node[state, draw=none](m1)[above of =t1]{};
\node[state, draw=none](m3)[below of=t8]{};
\node[state, draw=none](m2)[below left of =t8]{};

\path (a1) edge[purple] node {b} (a2);
\path (a1) edge[purple] node {c}  (a3);

\path (a2) edge[purple] node {a} (a4);
\path (a3) edge[purple] node {b} (a6);
\path (a3) edge[purple] node {c} (a7);

\path (a4) edge[purple] node {a} (a5);

\path (t1) edge[purple] node{a} (a1);

\path (a5) edge[purple] node{c} (t6);

\path (a7) edge[purple] node{a} (t8);
\path (a7) edge[purple] node{a} (t9);
\path(m1) edge[purple]node{b}(t1);

\path(t8) edge[purple] node{c}(m2);
\path(t8)edge[purple] node{a}(m3);

\end{tikzpicture}}
\begin{tikzpicture}[->,>=stealth',shorten >=1pt,auto,node distance=4.2em,semithick]
  \tikzstyle{every state}=[]
  
  \node[state, fill=mygray] (s1) [] {};
    \node[state, fill=mygray] (s2) [below right of =s1]{};
   \node[state, fill=mygray ](s3)[below left of =s1]{};
   
   \node[state, fill=mygray](s4)[below of =s2]{};
   \node[state, fill=mygray](s6)[ left of =s4]{};
   \node[state, fill=mygray](s7)[left of=s6]{};
   
   \node[state, fill=mygray] (s5)[below of =s4]{};

   \node[state] (e1)[above right of =s1]{};
   \node[state] (e2)[ above of =s1]{};
   
   \node [state] (e3) [right of =s2]{};
   \node [state] (e4) [right of =s5]{};
   \node [state] (e5) [below right  of =s5] {};
   \node [state] (e6) [ below  of =s5]{};
   
   \node [state] (t2) [below left of =s5]{};
   
\node [state] (e7) [below of =s6]{};
\node [state] (e8)[below of =s7]{};
\node [state] (t1)[above left of =s1]{};

\node [state] (t3)[below left of =s7]{};
\node [state ](t4) [left of=s7]{};

 \node[state, draw=none](n1)[above of =t1]{};
 \node[state, draw=none](n2)[above of=e2]{};
 \node[state, draw=none](n3)[above of=e1]{};
 \node[state, draw=none](n4)[above right of =e1]{};
 
 \node[state, draw=none](n5)[below of =e3]{};
 \node[state, draw=none](n6)[below right of =e3]{};
 
 \node[state, draw=none](n7)[below of =e8]{};
 \node[state, draw=none](n8)[below left of =t3]{};
 \node[state, draw=none](n9)[below of =t3]{};
 
\path (s1) edge node {b} (s2);
\path (s1) edge node {c}  (s3);

\path (s2) edge node {a} (s4);
\path (s3) edge node {b} (s6);
\path (s3) edge node {c} (s7);

\path (s4) edge node {a} (s5);

\path (e1) edge[blue] node{a} (s1);
\path (e2) edge[blue] node{a} (s1);

\path (s2) edge[blue] node{b} (e3);
\path (s5) edge[blue] node{c} (e4);
\path (s5) edge[blue] node{c} (e5);
\path (s5) edge [blue] node{c} (e6);
\path (s5) edge[purple] node{c} (t2);

\path (s6) edge[blue] node{b} (e7);
\path (s7) edge[blue] node{b} (e8);

\path (t1) edge[purple] node{a} (s1);
\path (s7) edge[purple] node{a} (t3);
\path (s7) edge[purple] node{a} (t4);

\path(n1)edge[purple] node{b} (t1);
\path(n2) edge[blue] node{b} (e2);
\path (n3) edge[blue] node{b} (e1);
\path(n4) edge[blue] node{b} (e1);

\path(e3)edge[blue] node{b}(n5);
\path(e3)edge[blue] node{a}(n6);

\path(e8) edge[blue] node{a}(n7);
\path(t3) edge[purple] node{c}(n8);
\path(t3) edge[purple] node{a}(n9);

\end{tikzpicture}
\caption{Tunneling a block of size $7$ and width $2$ in a Wheeler graph.}
\label{fig:tunnel}
\end{figure}
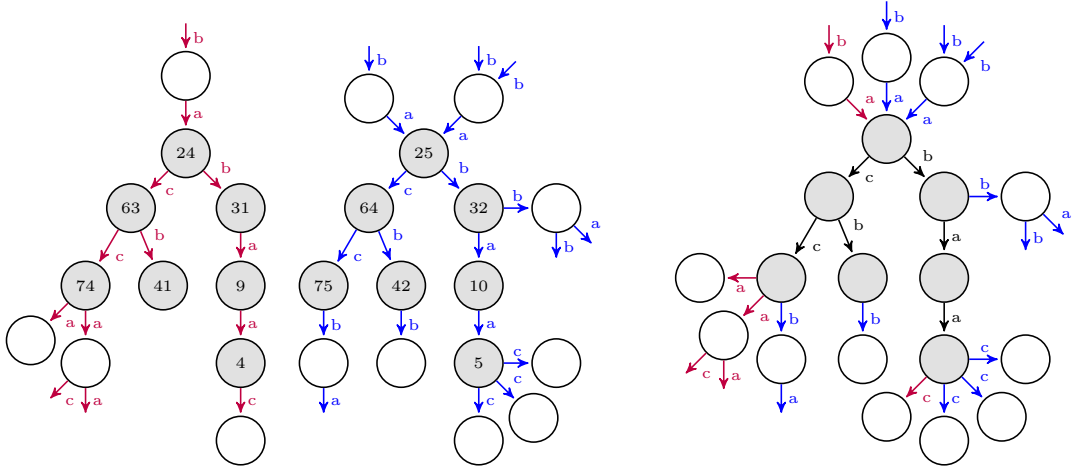

A block $\mathcal{B}=((v_{1,1}, \dots, v_{w,1}), \dots, $ $(v_{1,s}, \dots,v_{w,s}))$, abbreviated $\mathcal{B}=(v_{i,j})_{1 \leq i \leq w, 1 \leq j \leq s}$, is called \emph{maximal in size} if, for any choice of nodes $v_1, \dots, v_{w} \in V$, the sequences of size $s+1$ of $w$-tuples $((v_i)_{1 \leq i \leq w}, (v_{i,1})_{1 \leq i \leq w}, \dots,$ $(v_{i, s})_{1 \leq i \leq w})$ and $((v_{i,1})_{1 \leq i \leq w}, \dots,$ $(v_{i, s})_{1 \leq i \leq w}, (v_i)_{1 \leq i \leq w})$ do not form a block.
The block is called \emph{maximal in width} if, for any choice of nodes $v_1, \dots, v_{s} \in V$, the sequences of $(w+1)$-tuples $((v_j,v_{1,j}, \dots, v_{w,j}))_{1 \leq j \leq s}$ and $((v_{1,j}, \dots, v_{w,j}, v_j))_{1 \leq j \leq s}$ do not form a block.  A block is called \emph{maximal} if it is maximal in both width and size. \\

Let $\mathcal{G}=(V, E, \lambda)$ denote a Wheeler graph containing a maximal block. We then obtain a directed edge-labeled (multi-)graph $\mathcal{G}_t$ from $\mathcal{G}$ as follows:
\begin{itemize}
\item[(i)] We merge the corresponding nodes and edges of the isomorphic subtrees $t_i$, with $1 \leq i \leq w$, in order to obtain a subtree $t$ of $\mathcal{G}$. In particular, for $1 \leq j \leq s$, we collapse the nodes of the $w$-tuple $(v_{1,j}, \dots, v_{w,j})$ to obtain a node $x_j$ of the graph $t$.
The labels of the merged edges coincide and stay the same. 
\item[(ii)] All edges leading from a non-block node $u$ to the root node $v_{i,1}$ of the subtree $t_i$ are redirected to lead to the node $x_1$ of $t$, preserving their labels. 
\item[(iii)] All edges leading from a node $v_{i,j}$ of subgraph $t_i$  to a non-block node $u$ are redirected to leave from the node $x_j$ of $t$, preserving their labels. 
\end{itemize}

Formally, the graph $\mathcal{G}_t=(V_t, E_t, \lambda_t)$ is defined as follows:
The set of nodes of $\mathcal{G}_t$ is defined as $V_t = (V \setminus \{v_{i,j} \mid 1 \leq i \leq w, 1 \leq j \leq s\}) \cup \{x_j \mid 1 \leq j \leq s\}$. We define a function $\varphi: V \rightarrow V_{t}$ mapping a node in $\mathcal{G}$ to its corresponding node in $\mathcal{G}_t$ by
\begin{align*}
\varphi(v) = \begin{cases} v \quad &\text{ if } v \notin \{v_{i,j} \mid 1 \leq i \leq w, 1 \leq j \leq s\} \\
x_j  \quad &\text{ if } v=v_{i,j} \text{ for some integers } 1 \leq i \leq w, 1 \leq j \leq s.
\end{cases}
\end{align*}
The multiset of edges $E_t$ is defined as the difference of multisets $\{(\varphi(u), \varphi(v)) \mid (u,v) \in E\}\setminus \{(\varphi(u), \varphi(v)) \mid (u,v) \in E_i, 2 \leq i \leq w\}$. For every edge $(x,y) \in E_t$, there is a corresponding edge $(u,v) \in E$ such that $(x,y)=(\varphi(u), \varphi(v))$. Thus, we define $\lambda_t((x,y)) = \lambda((u,v))$. This is well-defined because only edges with the same label are merged. We call {\em tunneling} the process of obtaining the graph $\mathcal{G}_t$ from $\mathcal{G}$. See Fig.~\ref{fig:tunnel}.

\begin{lemma}\label{wheelerlemma}
Let $\mathcal{G}=(V, E, \lambda)$ denote a Wheeler graph containing a maximal block $ \mathcal{B}=(v_{i,j})_{1 \leq i \leq w, 1 \leq j \leq s}$ of width $w$ and size $s$ and let $\mathcal{G}_t=(V_t, E_t, \lambda_t)$ denote the graph obtained from $\mathcal{G}$ by tunneling. Then $\mathcal{G}_t$ is a Wheeler graph. 
\end{lemma}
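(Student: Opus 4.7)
The plan is to exhibit a Wheeler ordering on $V_t$ inherited from that of $\mathcal{G}$ and verify Definition~\ref{def:wg} directly. For each $v\in V_t$, choose a canonical lift $\tilde v\in\varphi^{-1}(v)$: set $\tilde v=v$ when $v$ is a non-block node, and $\tilde x_j=v_{1,j}$ for the merged nodes. Define $u<_t v$ iff $\tilde u$ precedes $\tilde v$ in the Wheeler order of $\mathcal{G}$. The first thing to check is that in-degree-$0$ nodes come first. No block node ever has in-degree $0$ in $\mathcal{G}$ (by (iii) the roots $v_{i,1}$ have incoming edges, by (iv) the remaining $v_{i,j}$ have in-degree $1$), and the redirection step does not remove incoming edges from non-block nodes. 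Hence the set of in-degree-$0$ nodes is unchanged and they still occupy the prefix of $<_t$.

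For condition (i), given $e_1=(u_1,v_1),e_2=(u_2,v_2)\in E_t$ with $\lambda_t(e_1)\prec\lambda_t(e_2)$, I would pick representatives $e_i'=(u_i',v_i')\in E$ with $\varphi(u_i')=u_i$, $\varphi(v_i')=v_i$ and equal labels. The Wheeler property of $\mathcal{G}$ gives $v_1'<v_2'$, and I want to promote this to $v_1<_t v_2$. The crucial ingredient is (i) of Def.~\ref{def:block}: each column $(v_{1,j},\dots,v_{w,j})$ is a consecutive run in the Wheeler order of $\mathcal{G}$. Therefore if $v_2'=v_{k_2,j_2}$ is a block node, nothing outside column $j_2$ can lie strictly between $v_{1,j_2}$ and $v_{w,j_2}$, so $v_1'<v_2'$ forces $\tilde v_1\le v_{1,j_2}=\tilde v_2$. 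The only subtlety is ruling out $\varphi(v_1')=\varphi(v_2')$; but that would place $v_1',v_2'$ in a common column, and then (iii), (iv) together with the subtree isomorphism force all their incoming edges to share a label, contradicting $\lambda_t(e_1)\prec\lambda_t(e_2)$.

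Condition (ii) is handled by a case split on whether each of $u_1,u_2,v_1,v_2$ is an $x$-node. Given $\lambda_t(e_1)=\lambda_t(e_2)$ and $u_1<_t u_2$, I pick representatives in $E$ with the same label and argue that any admissible choice of source lifts satisfies $u_1'<u_2'$ in $\mathcal{G}$: if $u_i=x_{j_i}$ then every lift $u_i'=v_{k_i,j_i}$ has rank in the interval $[\sigma(v_{1,j_i}),\sigma(v_{1,j_i})+w-1]$, and since $u_1<_t u_2$ means $\sigma(v_{1,j_1})<\sigma(v_{1,j_2})$ these disjoint consecutive intervals cannot overlap. When the edge leaves a block source toward a non-block target I use the freedom granted by (v)(b) to pick some $i$ for which the edge actually exists in $E$; when it stays inside the subtree I use the $t_1$-representative guaranteed by (v)(a). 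Applying Wheeler (ii) of $\mathcal{G}$ yields $v_1'\le v_2'$, and the same consecutive-column argument as above lifts this to $v_1\le_t v_2$.

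The main obstacle is the verification of condition (ii) in the mixed cases (subtree-internal edge vs.\ edge leaving the tunnel; source or target a merged node), where I must simultaneously certify both the label equality and the source inequality in $\mathcal{G}$ via a single well-chosen representative. Everything hinges on the tight consecutiveness of the columns given by Def.~\ref{def:block}(i) and on the clean dichotomy (v)(a)/(v)(b) that prevents any ambiguity when lifting an edge from $E_t$ back to $E$.
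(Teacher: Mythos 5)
Your proof is correct and follows essentially the same route as the paper's: induce the order on $V_t$ from the consecutiveness of each merged column, lift edges of $E_t$ back to $E$, and transfer the Wheeler conditions through the monotone map $\varphi$, ruling out the critical case $\varphi(v_1')=\varphi(v_2')$ in condition (i) exactly as the paper does (same-column nodes have identically labeled in-edges by (ii)--(iv)). The only cosmetic differences are your handling of in-degree-$0$ nodes --- the paper argues directly that tunneling never decreases a node's in-degree, rather than reading Def.~\ref{def:block}(iii) as asserting that the roots $v_{i,1}$ actually have in-edges, which the definition does not literally guarantee --- and your appeal to (v)(a)/(v)(b) when choosing edge representatives, which is unnecessary since any representative works once one knows that every lift of $u_1$ precedes every lift of $u_2$.
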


\begin{proof}
%\textcolor{red}{See the longer version of this article [cite].}
To show that $\mathcal{G}_t$ is a Wheeler graph, we must define a Wheeler ordering on $V_t$. As only consecutive nodes of $V$ are merged in order to obtain $\mathcal{G}_t$, this induces a canonical ordering on the nodes of $V_t$: Pick two nodes $x\neq y$ of $V_t$. Let $\varphi^{-1}(x), \varphi^{-1}(y)\subseteq V$ denote the corresponding preimages under $\varphi$. By property (i) of Def.~\ref{def:block}, the nodes of $\varphi^{-1}(x)$ (respectively, $\varphi^{-1}(y)$) are consecutive in Wheeler order. Thus, we either have $u<v$ for every $u \in \varphi^{-1}(x), v \in \varphi^{-1}(y)$, or vice versa. We set $x<y$ in the first case and $x>y$ in the second. This yields an ordering on the nodes of $V_t$ such that for any two nodes $u \neq v$ of $V$, we have $\varphi(u) < \varphi(v) \Rightarrow u < v $ and $u < v \Rightarrow \varphi(u) \leq \varphi(v)$. 

First, take two nodes $x \neq y$ of $\mathcal{G}_t$, such that $x$ has in-degree $0$ and $y$ is of positive in-degree. As in the process of tunneling the in-degree of a node is not decreased, every node $u \in \varphi^{-1}(x)$ is of in-degree $0$ as well. Moreover, as $y$ is of positive in-degree, there is a node $v \in \varphi^{-1}(y)$, such that $v$ is of positive in-degree. As $\mathcal{G}$ is a Wheeler graph, we have $u < v$ and thus $x=\varphi(u) \leq \varphi(v)=y$. As by assumption, $x \neq y$, this yields $x<y$. Therefore, nodes with in-degree $0$ precede those with positive in-degree. 

Now, take two edges $(x,y)$ and $(x', y')$ of $\mathcal{G}$ labeled with $a$ and $a'$, respectively. Without loss of generality, assume $a \preceq a'$. Choose $u \in \varphi^{-1}(x)$, $v \in \varphi^{-1}(y)$, $u' \in \varphi^{-1}(x')$ and $v' \in \varphi^{-1}(y')$, such that $(u,v)$ and $(u', v')$ are edges of $\mathcal{G}$. By definition of $\mathcal{G}_t$, the label on the edge $(u,v)$ (resp., $(u', v')$) is $a$ (resp., $a'$). We then have $(x,y) = (\varphi(u), \varphi(v))$ and $(x', y') = (\varphi(u'), \varphi(v'))$.
Consider the two cases of Def.~\ref{def:wg}: \\ 
(i) Let $a \prec a'$. As $\mathcal{G}$ is a Wheeler graph, we have $v < v'$ and thus $\varphi(v) \leq \varphi(v')$. If $\varphi(v) < \varphi(v')$ we are done, so assume $\varphi(v) = \varphi(v')$. We then have $v=v_{i,j}$ and $v'=v_{k,j}$ for some nodes $v_{i,j} \neq v_{k,j}$ of the block. By properties (ii) - (iv) of Def.~\ref{def:block}, the labels of the incoming edges of $v_{i,j}$ and $v_{k,j}$ are the same, contradicting $a \prec a'$.\\
(ii) Let $a = a'$ and without loss of generality assume $u < u'$. This yields $\varphi(u) \leq \varphi(u')$. As $\mathcal{G}$ is a Wheeler graph, we obtain $v \leq v'$, and thus $\varphi(v) \leq \varphi(v')$. 
\end{proof}

Two blocks $\mathcal{B}_1=(v_{i,j})_{1 \leq i \leq w_1, 1 \leq j \leq s_1}$ and $\mathcal{B}_2=(u_{i,j})_{1\leq i \leq w_2, 1 \leq j \leq s_2}$ of a Wheeler graph $\mathcal{G}=(V, E, \lambda)$ are called \emph{disjoint} if their corresponding node sets $\{v_{i,j}\mid 1 \leq i \leq w_1, 1 \leq j \leq s_1\} \subset V$ and $\{u_{i,j}\mid 1 \leq i \leq w_2, 1 \leq j \leq s_2\} \subset V $ are disjoint. Since, by Lemma \ref{wheelerlemma}, a graph obtained from a Wheeler graph by tunneling is still a Wheeler graph, we can tunnel iteratively with disjoint blocks.

\begin{defi}[Tunneled Graph] \label{def:tunneled}
Let $\mathcal{G}$ be a Wheeler graph containing $k$ pairwise disjoint maximal blocks $\mathcal{B}_1, \dots, \mathcal{B}_k$. The \emph{tunneled graph} $\mathcal{G}_t$ of $\mathcal{G}$ corresponding to those blocks is defined as the Wheeler graph obtained from $\mathcal{G}$ by iteratively tunneling all the blocks $\mathcal{B}_i$, for $1 \leq i \leq k$. Each maximal block $\mathcal{B}_i$ is also called a {\em tunnel}. \qed
\end{defi}

Note that tunnels more general than the tree form given in Def.~\ref{def:block} would break the Wheeler graph rules of Def.~\ref{def:wg} before or after tunneling, except that the nodes of $t_1$ and $t_w$ could be connected with outside nodes, all of Wheeler ranks smaller and larger, respectively, than their corresponding tunnel edges. We could also handle forests, but those can be seen as a set of disjoint tunnels.

\Section{Path searching on a tunneled Wheeler graph}

%With these data structures, we can simulate a search in the original graph $\mathcal G$. We can support the following operations: given the rank $r$ of a node $v$, we can find the number of edges with label $c$ going out from $v$ by computing $$rank_c(L,select_1(O, r+1)-1) - rank_c(L,select_1(O, r)-1) \; ,$$ and we can find the rank of the destination node of the edge with rank $k$ by computing $C[c] + rank_1(I, k)$, where $c$ is the label of the edge.
%($rank_1$ on $I$ and $C$-array).

%We can search for substrings by following the smallest and the largest edges with label $c$ from the current nodes. The ranks of these edges can be found with rank on $L$. In other words:

%Suppose we have an edge $e = (u,v)$ with rank $k$ and we want to take the smallest edge with label $c$ from $v$. First we find the rank of node $v$, which is $r = rank_1(I,k)$. The edges from $v$ are in $L[select_1(L,r)..select_1(L,r+1)]$. By using rank we can find the rank of the first out-edge from $v$ with $c$. 

%Both (a) and (b) are described just before Lemma 4.

Wheeler graphs can be searched for the existence of paths whose concatenated labels yield a given string $P$ \cite{gagie2017wheeler}, generalizing the classical backward search on strings \cite{FM05}. We now show that those searches can also be performed on tunneled Wheeler graphs.

Given $\mathcal G_t$, we can simulate the traversal of $\mathcal G$ as follows. A node $v \in V$ is represented by a pair $\langle \varphi(v), \texttt{off}(v)\rangle $, where $ \varphi(v)$ is the corresponding node in $\mathcal G_t$ as defined in the previous section and $\texttt{off}(v)$ is the \emph{tunnel offset} of $v$. If the node $\varphi(v)$ does not belong to a tunnel, then it must be that $\texttt{off}(v)=1$ and the pair represents just the node $v$. Otherwise, $\varphi(v)$ corresponds to multiple nodes $(v_{i,j})_{1\le i\le w}$ of some tunnel $\mathcal{B}=(v_{i,j})_{1\le i\le w,1\le j\le s}$ in $\mathcal G$ and the tunnel offset represents which of the original nodes we are currently at, in Wheeler rank order. That is, if $v=v_{i,j}$, then $\texttt{off}(v)=i$.

%Similarly, each edge between two nodes in a tunnel corresponds to multiple edges in $\mathcal{G}$. Let $\rho : E \rightarrow E_t$ be a mapping such that $\rho((u,v)) = (\varphi(u), \varphi(v))$. Each edge $e \in E$ is represented by a pair $\langle \rho(e), \texttt{off}(e) \rangle$, where $\texttt{off}(e) = |\{ e' \in E \; | \; \rho(e') = \rho(e) \wedge e' \leq e \} | $.

The idea is that, when our traversal enters a tunnel $\mathcal B$, we remember which original subgraph $t_i$ we actually entered, and use that information to exit the tunnel accordingly. %This will be feasible due to condition (iv) in Def.~\ref{def:block}.
We mark the nodes of $\mathcal G_t$ that are tunnel entrances in a bitvector, and the other tunnel nodes in another bitvector. We then distinguish three cases.

\paragraph{Keeping out of tunnels.} If we are at a pair $\langle i,1\rangle$ not in a tunnel, compute $j$ and $r$ with Eqs.~(\ref{eq:node_to_edge}) and (\ref{eq:edge_to_node}), respectively, and it turns out that $r$ is not marked as a tunnel entrance, then we stay out of any tunnel and our new pair is $\langle r,1\rangle$.

\paragraph{Entering a tunnel.} Assume we are at a pair $\langle i,1\rangle$, where $i=\varphi(v)$ for some non-tunnel node $v$, compute $j$ and $r$ with Eqs.~(\ref{eq:node_to_edge}) and (\ref{eq:edge_to_node}), respectively, and then it turns out that $r = \varphi(u)$ is marked as a tunnel entrance. Then we have entered a tunnel and the new pair must be $\langle r,o\rangle$, for some offset $o$ we have to find out.

The $w$ nodes $(u_p)_{1 \le p \le w} \in V$ that were collapsed to form $\varphi(u)$ are of indegree $0$ within the subgraphs $t_p$, but may receive a number of edges from non-tunnel nodes (indeed, we are traversing one). Since all those edges are labeled by the same symbol $c$, the Wheeler rank of all the sources of edges that lead to $u_p$ must precede the Wheeler ranks of all the sources of edges that lead to $u_{p'}$ for any $1 \le p < p' \le w$.

The problem is, knowing that we are entering by the edge with Wheeler rank $j$, and that the edges that enter into $r$ start at Wheeler rank $select_1(I,r)-r+1$, how to determine the index $o$ of the subgraph $t_o$ we have entered. For this purpose,
we store a bitvector $I'[1..m]$, where $m=|E_t|$, so that $I'[j]=1$ iff the $j$th edge of $E_t$, in Wheeler order, corresponds to the first edge leading to its target in $\mathcal G$. Said another way, $I'$ marks, in the area of $I$ corresponding to the edges that reach $\varphi(u)$, which were the first edges arriving at each copy $u_p$ that was collapsed to form $\varphi(u)$. We can then compute $o = rank_1(I',j)-rank_1(I',select_1(I,r)-r)$.

\paragraph{Moving in a tunnel.} Assume we are at a pair $\langle i,o\rangle$ inside a tunnel, for $i=\varphi(u)$, and want to traverse the $k$th edge labeled $c$ leaving the pair. We then use the formulas given after Eqs.~(\ref{eq:node_to_edge}) and (\ref{eq:edge_to_node}) to compute the first and last edge labeled $c$ leaving node $i$. These form a Wheeler range $[j_1,j_2]$. We then apply Eq.~(\ref{eq:edge_to_node}) from $j=j_1$ to find the first target node $r$. If $r$ is marked as an in-tunnel node, then we know that letter $c$ keeps us inside the tunnel, $j_1=j_2$ (that is, there is exactly one out-edge by letter $c$ from each of the nodes $(u_p)_{1\le p\le w}$ that were collapsed to form $\varphi(u)$, by part (a) of item (v) of Definition \ref{def:block}), and thus our new node is simply $\langle r,o\rangle$.

If, instead, $r$ is not an in-tunnel node, then letter $c$ takes us out of the tunnel, and we must compute the appropriate target node. Each of the nodes $u_s$ may have zero or more outgoing edges labeled $c$. The Wheeler ranks of the edges leaving $u_p$ precede those of the edges leaving $u_{p'}$, for any $1 \le p < p' \le w$. Thus, we use a bitvector $O'[1..m]$ analogous to $I'$, where $O'[j]=1$ iff the $j$th edge of $E_t$ in Wheeler order corresponds to the first edge leaving a node in $\mathcal G$ by some letter $c$. In the range $O'[j_1,j_2]$, then, each $1$ marks the first edge leaving from each $u_p$. The $k$th edge labeled $c$ leaving from $u_o$ is thus $j = select_1(O',rank_1(O',j_1)+o-1)+k-1$. If, for pattern searching, we want the last edge labeled $c$ leaving from $u_o$, this is $j = select_1(O',rank_1(O',j_1)+o)-1$. We then compute the correct target node rank $r$ using Eq.~(\ref{eq:edge_to_node}).

Finally, there are two possibilities. If the new node $r$ is marked as a tunnel entrance, then we have left our original tunnel to enter a new one. We then apply the method described to enter a tunnel from the values $j$ and $r$ we have just computed. Otherwise, $r$ is not a tunnel node and we just return the pair $\langle r',1\rangle$.

\bigskip

Therefore, we can simulate path searching in $\mathcal G$ by using just our representation of $\mathcal G_t$. Given a character $c$ and the Wheeler range $[i,i']$ of a string $S$, we can find the Wheeler range of the string $Sc$ by following the first and last edge labeled with $c$ leaving from $[i,i']$, as described after Eqs.~(\ref{eq:node_to_edge}) and (\ref{eq:edge_to_node}), and operate as described from the corresponding ranks $j$ and $r$. Thus, after $|P|$ steps, we have the Wheeler range of the nodes that can be reached by following the characters in $P$.

\begin{theorem}
We can represent a Wheeler graph $\mathcal G$ with labeled edges from the alphabet $[1..\sigma]$ in $n_t \log \sigma + o(n_t\log\sigma)+O(n_t)$ bits of space, where $n_t$ is the number of edges in a tunneled version of $\mathcal G$, such that we can decide if there exists a path labeled with $P$ in time $O(|P| \log \log_w \sigma)$ in a $w$-bit RAM machine.
\end{theorem}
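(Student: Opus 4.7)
The plan is to assemble the representation by instantiating the Wheeler-graph data structures of Section 2 on the tunneled graph $\mathcal{G}_t$ and adding the auxiliary bitvectors $I'$, $O'$, together with the two marker bitvectors (tunnel-entrance nodes and in-tunnel nodes) introduced in Section 4. Once the structures are in place, the bounds follow by a component-wise space accounting and a per-step time analysis of the three simulation cases given in Section 4.

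For the space bound, the tunneled graph has $n_t$ edges by definition, so the label string $L$ over alphabet $[1..\sigma]$ takes $n_t \log \sigma$ bits and, by \cite[Thm.~8]{BNtalg14}, admits $rank$ in $O(\log \log_w \sigma)$ time with $o(n_t \log \sigma)$ extra bits. The cumulative-count array $C$ needs only $O(\sigma \log n_t)$ bits. The bitvectors $I$ and $O$ of the Wheeler-graph representation, the bitvectors $I'$ and $O'$ used to decode tunnel entries and exits, and the two bitvectors marking tunnel-entrance and in-tunnel nodes all have length $O(n_t)$, and each supports constant-time $rank/select$ with $o(n_t)$ extra bits \cite{Cla96}. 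Summing yields the claimed $n_t \log \sigma + o(n_t \log \sigma) + O(n_t)$ bits.

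For the time bound, I would observe that extending a Wheeler range $[i,i']$ by a character $c$ on a plain Wheeler graph requires only a constant number of $rank_c$ and $select$ queries on the underlying structures (the formulas stated after Eqs.~(\ref{eq:node_to_edge}) and (\ref{eq:edge_to_node})); the dominant cost comes from the two $rank_c$ queries on $L$, each taking $O(\log \log_w \sigma)$ time. Simulating the same step on $\mathcal{G}_t$ then follows the three cases of Section 4 (keeping out of a tunnel, entering a tunnel, and moving through or exiting a tunnel); each case adds only a constant number of constant-time $rank/select$ queries on the auxiliary bitvectors, so the per-character cost remains $O(\log \log_w \sigma)$. After processing the $|P|$ characters of $P$, a path labeled $P$ exists in $\mathcal{G}$ iff the resulting Wheeler range is non-empty, a trivial test.

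The main obstacle I expect is confirming that the extremal-edge technique in $\mathcal{G}_t$ returns the correct first/last edges for the corresponding Wheeler range in $\mathcal{G}$, even when the range straddles tunnel and non-tunnel nodes. This reduces to two properties already proved or stated in the excerpt: Lemma~\ref{wheelerlemma}, which gives $\mathcal{G}_t$ a Wheeler ordering compatible with $\varphi$; and path coherence of Wheeler graphs \cite{gagie2017wheeler}, which ensures that the image under $\varphi$ of a Wheeler range is again a Wheeler range. Combined with $I'$ and $O'$, which faithfully unfold the collapsed copies at tunnel entrances and exits, these facts justify that the simulation tracks the Wheeler range in $\mathcal{G}$ at each step, completing the proof.
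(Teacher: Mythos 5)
Your proposal is correct and follows essentially the same route as the paper: the paper's ``proof'' is precisely the preceding section's construction (the $L$, $C$, $I$, $O$ representation of $\mathcal{G}_t$ plus the bitvectors $I'$, $O'$ and the two tunnel-marker bitvectors, with the three traversal cases), and the theorem then follows from the same component-wise space accounting and the observation that each character extension costs $O(1)$ bitvector operations plus $O(\log\log_w\sigma)$ for $rank$ on $L$. Your added remarks on range correctness via Lemma~\ref{wheelerlemma} and path coherence are consistent with (indeed slightly more explicit than) what the paper states.
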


%We can find the Wheeler ranks of the first and the last edges labelled with $c$ in the range by using rank queries on $L$.

%%%% SEARCHING IN A TUNNELED WHEELER GRAPH %%%%%%%%%%%%%%%%%%%%%%%%%%%%%%%%%%%%

\Section{Wheeler graphs of strings}

We now focus on a particular type of Wheeler graphs, which corresponds to the traditional notion of Burrows-Wheeler Transform (BWT) \cite{BW94} and FM-index \cite{FM05} (only that our arrows go forward in the text, not backwards), and show that the full self-index functionality on strings can still be supported after tunneling. This is close to the original tunneling concept developed by Baier \cite{baier2018undetected}, to which we now add search and traversal capabilities.

\begin{defi}[Wheeler graph of a string]
Let $T$ be a string over an alphabet $A$. The \emph{Wheeler graph of the string} $T$ is defined as $\mathcal{G} = (V, E, \lambda)$ with $V:=\{v_1, \dots, v_{|T|+1}\}$, $E = \{(v_{i}, v_{i+1}) \mid 1 \leq i \leq |T|+1\}$ and $\lambda: E \rightarrow A$ with $\lambda((v_i, v_{i+1})) = T[i]$, where $T[i]$ denotes the $i$th character in the string $T$. \qed
\end{defi}
In other words, the Wheeler graph of a string $T$ is a path of length $|T|+1$, where the $i$th edge is labeled with the $i$th character of $T$. There is exactly one valid Wheeler-ordering, which is given by the colexicographic order of prefixes of $T$, i.e., node $v_i$ comes before node $v_j$ iff the reverse of $T[1..i-1]$ is lexicographically smaller than the reverse of $T[1..j-1]$. There is a close connection to the BWT: the Wheeler order is given by the suffix array of the reverse of $T$ and therefore the $L$-array corresponds to the BWT of the reverse of $T$.

For this special case of Wheeler graphs, Def.~\ref{def:block} simplifies as follows:
A block $\mathcal{B}$ in $\mathcal{G}$ of width $w$ and length $s$ is a sequence of length $s+1$ of $w$-tuples $(v_{1,1}, \dots, v_{w,1}), \dots, $ $(v_{1,s+1}, \dots,v_{w,s+1})$ of pairwise distinct nodes of $\mathcal{G}$ satisfying 
\begin{itemize}
\item[(i)] For  $1 \leq i \leq w-1$ and $1 \leq j \leq s+1$, the immediate successor of the node $v_{i,j}$ with respect to the Wheeler ordering on $V$ is  $v_{i+1,j}$. 
\item[(ii)] For $1 \leq i \leq w$ and $1 \leq j \leq s$, $(v_{i,j}, v_{i,j+1})$ is an edge of $E$.
\item[(iii)] For $1 \leq j  \leq s$, all the edges leading to the nodes in $\{v_{i,j} \mid 1 \leq i \leq w\}$ have the same label.
\end{itemize}

The process of tunneling in a Wheeler graph $\mathcal{G}$ of a string then consists of
collapsing the nodes of each $w$-tuple $(v_{1,j}, \dots, v_{w,j})$ into a single node $x_j$ and collapsing the edges in $\{(v_{i,j}, v_{i,j+1}) \mid 1 \leq i \leq w\}$ into a single edge $(x_j, x_{j+1})$. 
Furthermore, all edges leading to a node $v_{i,1}$ for some integer $1 \leq i \leq w$ are redirected to lead to the node $x_1$ and all edges leaving from a node $v_{i,s}$ for some integer $1 \leq i \leq w$ are redirected to leave from the node $x_s$. The labels of the edges stay the same. Note that we ensure that every path of the tunnel is followed by a non-tunnel node.

Suppose we have the Wheeler graph $\mathcal G = (V,E,\lambda)$ of a string $T$. Denote $|T| = n$. Let $\mathcal G_t = (V_t, E_t, \lambda_t)$ be a tunneled version of $\mathcal G$ with $|V_t| = n_t$. We represent $\mathcal G_t$ with the data structures $L$, $C$, $I$ and $O$ described in the preliminaries. We can do path searches without the bitvectors $I'$ and $O'$ used in the previous section, as in these particular graphs they are all 1s.
We now describe how to implement the operations count, locate and extract, analogous to the operations in a regular FM-index, by using sampling schemes that extend those of the standard FM-index solution.

First, for each tunnel of length at least $\log n_t$ in $\mathcal G_t$, we store a pointer and the distance to the end of the tunnel for every $(\log n_t)$th consecutive node in the tunnel. This information takes $O(n_t)$ bits of space and lets us skip to the end of the tunnel in $O(\log n_t)$ steps by walking forward until the end of the tunnel is found, or until we hit a node with a pointer to the end. The stored distance value tells us how many nodes we have skipped over.

\paragraph{Locating.}
We define a graph $\mathcal G_c$, called the \emph{contracted graph}, that is identical to $\mathcal G_t$ except that tunnels have been contracted into single nodes. Let $\psi : \mathcal G_t \rightarrow \mathcal G_c$ be the mapping such that nodes in a tunnel in $\mathcal G_t$ map to the corresponding contracted node in $\mathcal G_c$. Take the path $Q = (q_1, \ldots, q_n)$ of all nodes in $\mathcal G $ in the order of the path from the source to the sink. Let $Q_c = ((\psi \circ \varphi)(q_1),\ldots,(\psi \circ \varphi)(q_n))$ be the corresponding path in $\mathcal G_c$. Let $Q_c'$ be the same sequence as $Q_c$ except that every run of the same node is contracted to length 1. Note that this path traverses all edges of $\mathcal G_c$ exactly once, i.e., it is an Eulerian path. We store a sample for every $(\log n)$th node in the path on $Q_c'$, except that if a node represents a contracted tunnel, we sample the next node (our definition of blocks guarantees that the next node is not in a tunnel). The value associated with the sample is the text position corresponding to the node. This takes space $O(n_c)$, where $n_c$ is the number of nodes in $\mathcal G_c$.

We can then locate the text position of a node by walking to the next sample in text order, using at most $\log n$ graph traversal operations in $\mathcal G_t$. 
 In this walk we may have to skip tunnels, which is done in $O(\log n_t)$ time using their stored pointers when necessary. In the end, we subtract the travelled distance from the text position of the sampled node to get the text position of the original node. We can view such a search as a walk in $\mathcal G_c$, where traversing a contracted-tunnel node takes $O(\log n_t)$ graph traversal steps and traversing a non-tunneled node takes just one. 
The worst case time, dominated by the time to traverse tunnels, is $O(\log n \log n_t)$ steps.

%\begin{lemma}
%Given the above data structures, the Wheeler-rank and the tunnel offset of a node in $\mathcal G_t$, it is possible to compute the text-position of the node in time $O(\log n_c \log n_t \log \sigma)$. 
%\end{lemma}

%\begin{proof}
% We walk forward in the graph until we hit a sampled node such that if we are in a tunnel, we skip to the end of the tunnel in $O(\log n_t)$ time. This can be viewed as a walk in $\mathcal G_c$ and by construction we need to take at most $2 \log n_c$ steps in $\mathcal G_c$ until we hit a sampled node. The worst-case time is dominated by traversing the tunnels. The path will contain at most $\log n_c$ tunnels, each of which can be traversed through in $\log n_t$ steps, so the total time taken is $O(\log n_c \log n_t)$.
%\end{proof}

\paragraph{Counting.}
Efficiently counting the number of occurrences of a pattern given its Wheeler-range requires a sampling structure different from that used for locating. The Wheeler-range could span many tunnels, whose widths are not immediately available. Let us define $w(v)$ as the width of the tunnel $v$ belongs to, or $1$ if $v$ does not belong to a tunnel. We can then afford to sample the cumulative sum of the values $w(v)$ for all the nodes $v$ up Wheeler-rank $k$ for every $k$ multiple of $\log n_t$ nodes, using $O(n_t)$ bits of space. Within this space we can also mark which nodes belong to tunnels.

This allows us to compute the sum of values $w(v)$ for any Wheeler range with endpoints that are multiples of $\log n_t$, which leaves us to compute the width of only $O(\log n_t)$ nodes at the ends of the range. For these nodes, we add $1$ if they are not in a tunnel; otherwise we go to the end of the tunnel using the stored pointers and compute the width of the tunnel by looking at the out-degree of the exit of the tunnel. The total counting time is then $O(\log^2 n_t)$ graph traversal steps.

%\begin{lemma}
%Given the above data structures and representations of the first and the last node in the range of a pattern, it is possible to count the number of occurrences of the pattern in time $O(\log^2 n_t \log \sigma)$. 
%\end{lemma}
%\begin{proof}

%We can query the sum of tunnel widths for any Wheeler range with endpoints that are multiples of $\log n_t$, which leaves us to compute the width of only $O(\log n_t)$ nodes at the ends of the range. For the rest of the nodes we just go to the end of the tunnel using the previously defined data structure and compute the width of the tunnel by looking at the out-degree of the exit of the tunnel. The total time is then $O(\log^2 n_t)$.
%\end{proof}

\paragraph{Extracting.}
To extract characters from $T$, we use a copy of the samples $\langle$Wheeler rank of graph node $v$, text position of node $v\rangle$ we store for locating, but sorted by text position. 
% another different sampling structure. We store a pair with the text position and the Wheeler rank of every $(\log n)$th node in the Eulerian path $Q_c'$, except that if the sampled node would be a contracted tunnel, we sample the next node on the path. 
Also, at the end of every tunnel of length at least $\log n_t$, we store backpointers to the nodes storing pointers to the end of the tunnel.

%Also at every $\log n_t$ position in every tunnel in $\mathcal G_t$ of length at least $\log n_t$, we store a pointer to the node $\log n_t$ nodes ahead of that node in the tunnel.

Suppose we want to extract $T[i..j]$. If we know the node $u$ of $\mathcal G_t$ representing position $i$, we can simply walk forward from that node to find the $j-i+1$ desired characters by accessing $L$ at each position. Therefore it is enough to show how to find the node $u$. We binary search our sample pairs to find the Wheeler rank of the closest sample before text position $i$. This sample is at most $\log n$ nodes away (in $\mathcal G_c$) from $u$, so we can reach $u$ in $O(\log n)$ steps in $\mathcal G_c$, or equivalently, $O(\log n\log n_t)$ steps in $\mathcal G_t$. Note, however, that our target node $u$ might be in a tunnel. If the tunnel is of length less than $\log n_t$, we walk towards it normally. If $u$ is inside a tunnel of length at least $\log n_t$, instead, we use its pointers to skip to the end of the tunnel, and from there take the backpointer to the nearest position before $u$; we then walk the (at most) $\log n_t$ nodes until reaching $u$.

The time needed to reach $u$ is again dominated by the time to skip over and within tunnels, so the total time complexity is $O(\log n \log n_t)$ graph traversal steps.

\bigskip

We note that, in all cases, our graph traversal steps are of a particular form, because all the edges leaving from the current node are labeled by the same symbol. That is, the $c$ in Eq.~(\ref{eq:node_to_edge}) is always $L[select_1(O,i)]$. This particular form of $rank$ is called {\em partial rank} and it can be implemented in constant time using $o(|L|\log|A|)$ further bits \cite[Lem.~2]{BNtalg13}. The following theorem summarizes the results in this section.

%\begin{lemma}
%Extract in time $O(\log^2 n_t \log \sigma)$.
%\end{lemma}

%\begin{proof}
%Suppose we want to extract characters $T[i], \ldots, T[j]$. If we know the node representing position $i$, we can simply walk forward from that node to find the desired characters. Therefore it is enough to show that we can find position $i$. We use our data structures to jump to the closest sampled node that has text position at most $i$ \footnote{How exactly?}. Then we walk forward, skipping over tunnels in $\log n_t$ time, and using the sampled forward pointers inside tunnels to find the exact node if it is inside a tunnel \footnote{Also probably need to store the length of a tunnel at the start of every tunnel}.
%\end{proof}

\begin{theorem}
We can store a text $T[1..n]$ over alphabet $[1..\sigma]$ in $n_t \log \sigma + o(n_t\log\sigma)+O(n_t)$ bits of space, such that in a $w$-bit RAM machine we can decide the existence of any pattern $P$ in time $O(|P| \log\log_w \sigma)$, and then report the text position of any occurrence in time $O(\log n \log n_t)$ or count the number of occurrences in time $O(\log^2 n_t)$. We can also extract any $k$ consecutive characters of $T$ in time $O(k + \log n \log n_t)$, where $n_t \le n$ is the number of nodes in a tunneled Wheeler graph of $T$.
\end{theorem}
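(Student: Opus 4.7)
The plan is to bundle the data structures developed in this section---the $L,C,I,O$ representation of $\mathcal{G}_t$, the tunnel-skip pointers, and the three sampling tables---and verify that they fit in the stated space while each claimed operation's running time follows from the analyses already sketched in the preceding paragraphs.

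First I would account for space. The representation of $\mathcal{G}_t$ reduces to $L$, $C$, $I$, and $O$, since $I'$ and $O'$ collapse to all-ones bitvectors in the string case and can be dropped. Equipping $L$ with the rank index of Belazzougui and Navarro yields $n_t\log\sigma + o(n_t\log\sigma)$ bits, and the constant-time rank/select support on $I$ and $O$ adds $O(n_t)$ further bits. The forward tunnel-skip pointers placed at every $(\log n_t)$-th node of each long tunnel, the backpointers stored at the tunnel exits, the cumulative-width samples used for counting, and the pairs of Wheeler rank/text-position samples used for locating and extracting each contribute $O(n_t)$ bits, so the totals match the claimed space bound.

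Next I would argue the $O(|P|\log\log_w\sigma)$ search time. For the Wheeler graph of a string every edge leaving a given node of $\mathcal{G}_t$ carries the same label, so each application of Eq.~(\ref{eq:node_to_edge}) needs only the \emph{partial rank} of that fixed label, which runs in constant time on $o(n_t\log\sigma)$ extra bits by the partial-rank construction cited in the preliminaries. Hence each of the $|P|$ backward-search steps from the previous section costs only the $O(\log\log_w\sigma)$ needed to resolve the non-partial rank on $L$ together with constant-time operations on $I$ and $O$, which gives the stated total. The remaining bounds follow by combining the analyses in the count/locate/extract paragraphs with this same observation: counting sums widths in $O(\log^2 n_t)$ traversal steps; locating walks to the nearest forward sample in at most $\log n$ contracted-graph hops, each of cost $O(\log n_t)$ because of the tunnel-skip pointers; and extracting first binary-searches the text-sorted samples, then reaches the source node in $O(\log n\log n_t)$ steps, and finally emits the $k$ requested characters in $O(k)$ additional time by following outgoing edges.

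The main point to verify carefully is the extract case in which the source node lies strictly inside a tunnel of length at least $\log n_t$: the backpointers stored at the tunnel exit must let us land within $O(\log n_t)$ nodes of the target after a single pointer jump, and I would confirm that the $\Theta(\log n_t)$ sampling rate of the forward pointers inside the tunnel is compatible with the backpointer placement so that the forward-jump-to-exit and backward-jump-to-neighborhood phases telescope to the claimed $O(\log n\log n_t)$ bound without hiding an extra factor of $\log n$.
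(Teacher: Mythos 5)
Your proposal is correct and follows essentially the same route as the paper, which presents this theorem as a summary of the section: the $L,C,I,O$ representation plus $O(n_t)$-bit tunnel-skip pointers, backpointers, cumulative-width samples and text-position samples for the space bound; general rank on $L$ for the $O(|P|\log\log_w\sigma)$ search; and constant-time partial rank for the traversal steps underlying the locate, count and extract bounds. The extract subtlety you flag (backpointers at tunnel exits landing within $O(\log n_t)$ nodes of a target inside a long tunnel) is exactly how the paper resolves that case.
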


\Section{Future work}

Open problems are: How to find the optimal blocks that minimize space? Can we still support path searching if blocks are overlapping? Can the $O(\log^2 n)$ times of counting, locating, and extracting be reduced to $O(\log n)$, as in the basic sampling scheme on non-tunneled BWTs? How to extend those operations to more complex graphs, like trees? And can we count paths instead of path endpoints? 

\paragraph{Acknowledgements.}

Funded in part by  EU's Horizon 2020 research and innovation
programme under  Marie Sk{\l}odowska-Curie grant agreement No 690941
(project BIRDS). T.G.\ and G.N.\ partially funded with Basal Funds FB0001, Conicyt, Chile. T.G.\ partly funded by Fondecyt grant 1171058.  L.S.B.\  supported by the EU project 731143 - CID and the DFG project LO748/10-1 (QUANT-KOMP).
Operation extract was designed in StringMasters 2018. We thank Uwe Baier for helpful discussions.

\Section{References}
\bibliographystyle{IEEEbib}
\bibliography{refs}

\begin{thebibliography}{10}

\bibitem{gagie2018optimal}
T.~Gagie, G.~Navarro, and N.~Prezza,
\newblock ``Optimal-time text indexing in {BWT}-runs bounded space,''
\newblock in {\em Proc. 29th SODA}, 2018, pp. 1459--1477.

\bibitem{baier2018undetected}
U.~Baier,
\newblock ``On undetected redundancy in the {B}urrows-{W}heeler transform,''
\newblock in {\em Proc. 29th CPM}, 2018, pp. 3.1--3.15.

\bibitem{gagie2017wheeler}
T.~Gagie, G.~Manzini, and J.~Sir{\'e}n,
\newblock ``Wheeler graphs: A framework for {BWT}-based data structures,''
\newblock {\em Theoretical Computer Science}, vol. 698, pp. 67--78, 2017.

\bibitem{FM05}
P.~Ferragina and G.~Manzini,
\newblock ``Indexing compressed texts,''
\newblock {\em Journal of the ACM}, vol. 52, no. 4, pp. 552--581, 2005.

\bibitem{mantaci2007extension}
S.~Mantaci, A.~Restivo, G.~Rosone, and M.~Sciortino,
\newblock ``An extension of the {B}urrows--{W}heeler transform,''
\newblock {\em Theoretical Computer Science}, vol. 387, no. 3, pp. 298--312,
  2007.

\bibitem{na2016fm}
J.~C. Na, H.~Kim, H.~Park, T.~Lecroq, M.~L{\'e}onard, L.~Mouchard, and K.~Park,
\newblock ``{FM}-index of alignment: A compressed index for similar strings,''
\newblock {\em Theoretical Computer Science}, vol. 638, pp. 159--170, 2016.

\bibitem{ferragina2005structuring}
P.~Ferragina, F.~Luccio, G.~Manzini, and S.~Muthukrishnan,
\newblock ``Structuring labeled trees for optimal succinctness, and beyond,''
\newblock in {\em Proc. 46th FOCS}, 2005, pp. 184--193.

\bibitem{siren2014indexing}
J.~Sir{\'e}n, N.~V{\"a}lim{\"a}ki, and V.~M{\"a}kinen,
\newblock ``Indexing graphs for path queries with applications in genome
  research,''
\newblock {\em IEEE/ACM Transactions on Computational Biology and
  Bioinformatics}, vol. 11, no. 2, pp. 375--388, 2014.

\bibitem{bowe2012succinct}
A.~Bowe, T.~Onodera, K.~Sadakane, and T.~Shibuya,
\newblock ``Succinct de bruijn graphs,''
\newblock in {\em Proc. WABI}, 2012, pp. 225--235.

\bibitem{Cla96}
D.~R. Clark,
\newblock {\em Compact {PAT} Trees},
\newblock Ph.D. thesis, University of Waterloo, Canada, 1996.

\bibitem{BNtalg14}
D.~Belazzougui and G.~Navarro,
\newblock ``Optimal lower and upper bounds for representing sequences,''
\newblock {\em ACM Transactions on Algorithms}, vol. 11, no. 4, pp. article 31,
  2015.

\bibitem{BW94}
M.~Burrows and D.~Wheeler,
\newblock ``A block sorting lossless data compression algorithm,''
\newblock Tech. {R}ep. 124, Digital Equipment Corporation, 1994.

\bibitem{BNtalg13}
D.~Belazzougui and G.~Navarro,
\newblock ``Alphabet-independent compressed text indexing,''
\newblock {\em ACM Transactions on Algorithms}, vol. 10, no. 4, pp. article 23,
  2014.

\end{thebibliography}

\end{document}